\newcommand{\newc}{\newcommand}
\newc{\beq}{\begin{equation}}
\newc{\eeq}{\end{equation}}
\newc{\bea}{\begin{array}}
\newc{\eea}{\end{array}}
\newcommand{\ben}{\begin{eqnarray}}
\newcommand{\een}{\end{eqnarray}}
\newc{\ra}{\rightarrow}
\newc{\bfx}{{\bf x}}
\newc{\bfV}{{\bf V}}
\newc{\cO}{{\cal O}}
\newc{\bfv}{{\bf v}}
\newc{\bfu}{{\bf u}}
\newc{\bfp}{{\bf p}}
\newc{\ve}{{\varepsilon}}
\newc{\Psibar}{\overline\Psi}
\newc{\w}{{\bf w}}
\newc{\E}{{\mathbf{E}}}
\newc{\EE}{{\mathcal E}}
\newc{\bfn}{{\mathbf\nabla}}
\newc{\la}{{\cal L}}
\newc{\tla}{{\tilde{\cal L}}}
\newc{\bp}{{\bf p}}
\newc{\ho}{\hookrightarrow }
\newc{\bP}{{\bf P}}
\newc{\pd}{{\partial}}
\newc{\piv}{{\partial_4}}
\newc{\pv}{{\partial_5}}
\newc{\bJ}{{\bf J}}
\newc{\bze}{{\mathbf 0}}
\newc{\bK}{{\bf K}}
\newc{\tphi}{{\tilde\phi}}
\newc{\tF}{{\tilde F}}
\newc{\tD}{{\tilde D}}
\newc{\tJ}{{\tilde J}}
\newc{\tj}{{\tilde j}}
\newc{\bD}{{\bf D}}
\newc{\tvphi}{{\tilde\varphi}}
\newc{\trho}{{\tilde\rho}}
\newc{\ttheta}{{\tilde\theta}}
\newc{\tpsi}{{\tilde\psi}}
\newc{\tu}{{\tilde u}}
\newc{\cD}{{\cal D}}
\newc{\tPhi}{{\tilde\Phi}}
\newc{\tPsi}{{\tilde\Psi}}
\newc{\tA}{{\tilde A}}
\newc{\talpha}{{\tilde\alpha}}
\newc{\tbeta}{{\tilde\beta}}
\newc{\bA}{{\mathbf A}}
\newc{\bB}{{\bf B}}
\newc{\br}{{\bf r}}
\newc{\sig}{{\mathbf\sigma}}
\newc{\eg}{{\rm e.g.\ }}
\newc{\ie}{{\rm i.e.\ }}
\newcommand{\bey}{\begin{eqnarray}}
\newcommand{\pslash}{\not{\hbox{\kern-2.3pt $p$}}}
\newcommand{\pdslash}{\not{\hbox{\kern-2pt $\partial$}}}
\newcommand{\eey}{\end{eqnarray}}
\newtheorem{definition}{Definition}
\newtheorem{theorem}{Theorem}
\newtheorem{proposition}[theorem]{Proposition}
\newenvironment{proof}[1][Proof]{\noindent\textbf{#1.} }{\ \rule{0.5em}{0.5em}}
\begin{document}

\begin{titlepage}
\vskip 2cm
\begin{center}
{\Large  Thermal Lie Superalgebras
\footnote{{\tt matrindade@uneb.br}}}
 \vskip 10pt
{  Marco A. S. Trindade$^{1}$, Eric Pinto$^{2}$   \\}
\vskip 5pt
{\sl $^1$Departamento de Ciências Exatas e da Terra, Universidade do Estado da Bahia,
Colegiado de Física, Bahia, Brazil \\
\sl $^2$Instituto de Física, Universidade Federal da Bahia,
 Bahia, Brazil}

\vskip 2pt
\end{center}

\begin{abstract}

We derive a general formulation for thermal Lie superalgebras motivated by thermofield dynamics formalism (TFD). Particularly, we construct the thermal Poincaré superalgebras. The operators in TFD are defined through the doubling of the degrees of freedom of the system and it can be related to Hopf algebras. In this way we explore the notion of quantum group associated with these superalgebras and we show the non-commutativity in this thermal scenario. Furthermore, the thermal M-superalgebra is also derived from TFD prescription.

\end{abstract}

\bigskip

{\it Keywords:} Lie superalgebras, thermofield dynamics.


\vskip 3pt

\end{titlepage}


\newpage

\setcounter{footnote}{0} \setcounter{page}{1} \setcounter{section}{0} %
\setcounter{subsection}{0} \setcounter{subsubsection}{0}
\section{Introduction}
\label{intro}
Supersymmetric superalgebras are pivotal tools in the construction of a supersymmetric theory, relating bosons and fermions \cite{Wess,Freund,Sohnius,Sal}. Supersymmetry is described by Lie superalgebras \cite{Kac} that contains generators in the spinor representation of the space-time symmetry group \cite{Traubenberg}. Of particular interest is the Poincaré SUSY algebra, a non trivial extension of Poincaré algebra. Specifically, it is a $Z_{2}$-graded vector space equipped with a graded Lie bracket in which the even part contains the Poincaré algebra and the odd part corresponds to spinors. Poincaré SUSY algebra is crucial in superstring theory and M-theory \cite{Kaku,Toppan1,Toppan2,Townsend,Holten}.
\  \

In the superstring theory, a thermofield dynamics approach (TFD) has been developed to construct thermal superstring and to calculate their thermodynamical properties \cite{Nedel}. In this way, thermal effects on the supersymmetry break can be analyzed. TFD is a real operator field theory at finite temperature \cite{Khanna}. Thermal field theories can be analyzed via w*-algebras resulting in the doubled Tomita-Takesaki representation related to structural elements of thermofield dynamics \cite{Matos}. Let $\pi (\mathcal{A})$ be a Tomita-Takesaki representation of w*-algebra $\mathcal{A}$ on a Hilbert space $\mathfrak{H}$ and $\sigma: \mathfrak{H} \rightarrow \mathfrak{H}$ antilinear isometry with $\sigma^{2}=1$ \cite{Matos,Tak,Ojima,Bratteli}. Thus \cite{Matos}
\begin{eqnarray}
\sigma \pi_{\omega} (\mathcal{A}) \sigma = \widetilde{\pi}_{\omega} (\mathcal{A}),
\end{eqnarray}
where $\omega$ is a state (GNS construction \cite{Bratteli}) on the w*-algebra $\mathcal{A}$, define an *-anti-isomorphism such that
\begin{eqnarray}
\left[\pi_{\omega} (\mathcal{A}), \widetilde{\pi}_{\omega}(\mathcal{A}) \right]=0.
\end{eqnarray}
In TFD formalism we have a doubling of Hilbert space $\mathfrak{H} \otimes \widetilde{\mathfrak{H}}$. The operators $O_{i}$ on the $\mathfrak{H} \otimes \widetilde{\mathfrak{H}}$ must satisfy the following tilde conjugation rules $({\sim})$ \cite{Khanna}:

\begin{eqnarray}
(O_{i}O_{j})^{\sim}&=&\widetilde{O}_{i}\widetilde{O}_{j}, \nonumber \\
(O_{i}+ \alpha O_{j})^{\sim}&=&\widetilde{O_{i}}+\alpha^{*}\widetilde{O}_{j}, \nonumber \\
(O^{\dag})^{\sim}&=&(\widetilde{O})^{\dag}, \nonumber \\
\widetilde{\widetilde{O}}&=&O, \nonumber \\
\left[O_{i},\widetilde{O}_{j}\right]&=&0. \label{tilde}
\end{eqnarray}

This leads a notion of thermal Lie algebra \cite{Santana} related to aforementioned w*-algebra. The meaning of Lie algebras was discussed in \cite{Khanna}. Particularly, the thermal Lie-Poincaré algebra was derived in \cite{Matos}.
\  \

In this work we will explore the notion of thermal algebras in the context of supersymmetry. We perform a general formulation for the thermal superalgebras. Specifically, we show how to construct such algebras and their representations. Supersymmetric algebras in four and eleven dimensions will be investigated in a thermal scenario. In addition we will also investigate structure of Hopf algebras \cite{Majid,Pre,Burban,Resh} in four dimensions. A motivation is the interest in noncommutative spaces \cite{Kobayashi} related to brane solutions in M-theory \cite{Nair,Taylor}.
\ \

The paper is organized as follows. Section 2 contains a general formulation about thermal Lie superalgebras. In section 3 we consider the thermal Poincaré superalgebra, its associated Hopf algebra and deformed (anti)commutation relations. In Section 3 we presented representations of thermal M-superalgebra from hat operators and Section 4 contains the conclusions.

\section{Thermal superalgebras}

A fundamental structures in TFD prescription are the thermal Lie algebras. Based in references \cite{Khanna,Santana} we present the next two definitions in order to derive our results. We will be using the Einstein summation convention in this manuscript.
\begin{definition}
A doubled Lie algebra $\mathfrak{g_{D}}$ is a Lie algebra with a Lie product
\begin{eqnarray}
\left[A_{i}, A_{j}\right]&=&C_{ij}^{k}A_{k}, \nonumber \\
\left[\widetilde{A}_{i}, \widetilde{A}_{j}\right]&=&-C_{ij}^{k}\widetilde{A}_{k}, \nonumber \\
\left[\widetilde{A}_{i},{A_{j}}\right]&=&0.
\end{eqnarray}
\end{definition}

\begin{definition}
A thermal Lie algebra $\mathfrak{g_{T}}$ is a Lie algebra with a Lie product given by
\begin{eqnarray}
\left[A_{i},A_{j}\right]&=&C_{ij}^{k}A_{k}, \nonumber \\
\left[\widehat{A_{i}},\widehat{A_{j}}\right]&=&C_{ij}^{k}\widehat{A_{k}}, \nonumber \\
\left[\widehat{A_{i}},A_{j}\right]&=&C_{ij}^{k}A_{k}.
\end{eqnarray}
\end{definition}
With definition of hat operators $\widehat{A}=A-\widetilde{A}$, it is easy verify that the above relations are satisfied \cite{Khanna,Santana}. Now, we present our results.
\begin{proposition}
Let $\mathfrak{h}$ be a direct sum of two isomorphic Lie algebras $\mathfrak{g}$ and $\mathfrak{\widetilde{g}}$, i.e.,  $\mathfrak{h}=\mathfrak{g} \oplus \mathfrak{\widetilde{g}}$, where $\mathfrak{g} \simeq \mathfrak{\widetilde{g}}$. Then there is two subalgebras of $\mathfrak{h}$, $\mathfrak{h_{T}}, \mathfrak{h_{D}}\subset \mathfrak{h}$,  isomorphic to algebras $\mathfrak{g_{T}}$ and $\mathfrak{g_{D}}$.
\end{proposition}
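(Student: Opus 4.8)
The plan is to reduce the statement to a one-line bracket computation carried out in a convenient basis. First I would fix a basis $\{A_i\}$ of $\mathfrak{g}$ with $[A_i,A_j]=C_{ij}^{k}A_k$ and transport it along the given isomorphism $\varphi:\mathfrak{g}\to\mathfrak{\widetilde{g}}$, setting $B_i:=\varphi(A_i)$. Since $\varphi$ is an isomorphism of Lie algebras, $\{B_i\}$ is a basis of $\mathfrak{\widetilde{g}}$ with the \emph{same} structure constants, $[B_i,B_j]=C_{ij}^{k}B_k$; and since $\mathfrak{h}=\mathfrak{g}\oplus\mathfrak{\widetilde{g}}$ is a direct sum of Lie algebras, $[A_i,B_j]=0$. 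These three families of relations are the only input needed.

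For the doubled copy, put $\widetilde{A}_i:=-B_i$ and $\mathfrak{h_{D}}:=\mathrm{span}\{A_i,\widetilde{A}_i\}$. Then $[\widetilde{A}_i,\widetilde{A}_j]=[B_i,B_j]=C_{ij}^{k}B_k=-C_{ij}^{k}\widetilde{A}_k$ and $[A_i,\widetilde{A}_j]=-[A_i,B_j]=0$, so the three relations defining a doubled Lie algebra hold on this generating set; in particular $\mathfrak{h_{D}}$ is closed under the bracket, hence a subalgebra. The linear bijection carrying the abstract generators of $\mathfrak{g_{D}}$ to $A_i$ and $\widetilde{A}_i$ then preserves brackets on a basis, hence is a Lie algebra isomorphism $\mathfrak{g_{D}}\cong\mathfrak{h_{D}}$ (the Jacobi identity being inherited from $\mathfrak{h}$).

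For the thermal copy I would use the hat prescription of the text, $\widehat{A}_i:=A_i-\widetilde{A}_i=A_i+B_i$, and set $\mathfrak{h_{T}}:=\mathrm{span}\{A_i,\widehat{A}_i\}$. A direct computation gives $[\widehat{A}_i,\widehat{A}_j]=[A_i,A_j]+[B_i,B_j]=C_{ij}^{k}(A_k+B_k)=C_{ij}^{k}\widehat{A}_k$ and $[\widehat{A}_i,A_j]=[A_i,A_j]=C_{ij}^{k}A_k$, which are exactly the relations defining a thermal Lie algebra; so $\mathfrak{h_{T}}$ is a subalgebra and, as above, $\mathfrak{g_{T}}\cong\mathfrak{h_{T}}$. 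It is worth noting in passing that $\mathfrak{h_{T}}$ and $\mathfrak{h_{D}}$ both coincide with $\mathfrak{h}$ as vector spaces, since $B_i=\widehat{A}_i-A_i$; the content of the proposition is the existence of the two distinguished generating sets realizing the doubled and thermal bracket structures inside $\mathfrak{h}$.

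I do not expect a genuine obstacle here — the argument is essentially bookkeeping — but the one point requiring care is the relative sign: the tilde sector of a doubled algebra carries structure constants $-C_{ij}^{k}$, so one must not identify $\widetilde{A}_i$ with $B_i$ directly but rescale by $-1$ (equivalently, use the canonical isomorphism $\mathfrak{g}\cong\mathfrak{g}^{\mathrm{op}}$); after that, bilinearity of the bracket closes everything. The place where the hypothesis is actually used is precisely the claim that $\mathfrak{g}\simeq\mathfrak{\widetilde{g}}$ supplies a basis of $\mathfrak{\widetilde{g}}$ whose structure constants match those of $\mathfrak{g}$, which is what makes both verifications go through; the same scheme will then carry over verbatim to the graded setting of the sequel, with brackets replaced by graded brackets.
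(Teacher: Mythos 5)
Your proof is correct and follows essentially the same route as the paper: realize the doubled and thermal generators inside $\mathfrak{g}\oplus\mathfrak{\widetilde{g}}$ (the paper's $B_i^{\mathfrak{g}}=(X_i,0)$, $\widehat{B}_i=(X_i,-\widetilde{X}_i)$ are your $A_i$, $\widehat{A}_i=A_i+B_i$ up to the sign convention on the tilde copy), verify the bracket relations componentwise, and conclude the isomorphism by matching generators. Your explicit handling of the sign (taking $\widetilde{A}_i=-B_i$ so the tilde sector carries structure constants $-C_{ij}^{k}$) is in fact cleaner than the paper, which leaves that convention implicit.
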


\begin{proof}
Consider the elements $B_{i}^{\mathfrak{g}}=(X_{i},0)$, $B_{i}^{\mathfrak{\widetilde{g}}}=(0, \widetilde{X}_{i})$ and $\widehat{B}_{i}=(X_{i},0)-(0,\widetilde{X}_{i})$ . Consequently,
\begin{eqnarray}
[\widehat{B}_{i}, \widehat{B}_{j}]&=&[(X_{i}, -\widetilde{X}_{i}), (X_{j}, -\widetilde{X}_{j})] \nonumber \\
&=&([X_{i}, X_{j}], [-\widetilde{X}_{i}, -\widetilde{X}_{j}]) \nonumber \\
&=&C_{i,j}^{k}(X_{k}, - \widetilde{X}_{k})  \nonumber \\
&=&C_{i,j}^{k}\widehat{B}_{k}.
\end{eqnarray}
Analogously,
\begin{eqnarray}
[\widehat{B}_{i}, B_{j}^{\mathfrak{g}}]=([X_{i},X_{j}],[-\widetilde{X}_{i},0])=C_{ij}^{k}(X_{k},0)=C_{ij}^{k} B_{k}^{\mathfrak{g}}
\end{eqnarray}
and
\begin{eqnarray}
[B_{i}^{\mathfrak{g}}, B_{j}^{\mathfrak{g}}]=C_{ij}^{k} B_{k}^{\mathfrak{g}}.
\end{eqnarray}
Thus $h_{\mathfrak{T}}\equiv span\{B_{j}^{\mathfrak{g}},\widehat{B_{j}}\}$ is a subalgebra of $\mathfrak{h}$. Now, we will show that $\psi: span\{B_{j}^{\mathfrak{g}},\widetilde{B}_{j}\}\subset\mathfrak{g}\oplus \mathfrak{\widetilde{\mathfrak{g}}}\rightarrow \mathfrak{g}_{\mathfrak{T}}$ given by $\psi(B_{j}^{\mathfrak{g}})=A_{j}$ and $\psi(\widehat{B}_{j})=\widehat{A}_{j}$ is an homomorphism. In fact,
\begin{eqnarray}
\psi[(X_{i},-X_{i}),(X_{j},-X_{j})]=\psi[\widehat{B}_{i},\widehat{B}_{j}]&=&\psi(C_{ij}^{k}\widehat{B}_{k}) \nonumber \\
&=&C_{ij}^{k}\psi(\widehat{B}_{k}) \nonumber \\
&=&[\widehat{A}_{i}, \widehat{A}_{j}] \nonumber \\
&=&[\psi(\widehat{B}_{i}), \psi(\widehat{B}_{j})].
\end{eqnarray}
Similarly, $\varphi(\widehat{A}_{j})=\widehat{B}_{j}$ also defines a homomorphism. It is easy to verify that
\begin{eqnarray}
\psi \circ \varphi = id_{\mathfrak{h_{T}}},
\varphi \circ \psi = id_{\mathfrak{g_{T}}},
\end{eqnarray}
where $id_{\mathfrak{h_{T}}}(\widehat{B}_{i})=\widehat{B}_{i}$ and $id_{\mathfrak{g_{T}}}(\widehat{A}_{i})=\widehat{A}_{i}$. Therefore $\psi=\varphi^{-1}$ and we have an isomorphism. We can perform an analogous procedure for $h_{\mathfrak{D}}\equiv span\{B_{j}^{\mathfrak{g}},\widetilde{B}_{j}\}$ and the proof is finished.
\end{proof}

\begin{definition}
The doubled Lie superalgebra is a superalgebra $G^{\mathfrak{D}}=G_{0}^{\mathfrak{D}}\oplus G_{1}^{\mathfrak{D}}$ with a product $[ \ \ , \ \ ]$ satisfying the following axioms
\begin{eqnarray}
[X_{i}, X_{j}]&=&-(-1)^{(degX_{i})(degX_{j})}[X_{j}, X_{i}], \nonumber
 \end{eqnarray}
\begin{eqnarray}
[\widetilde{X}_{i}, \widetilde{X}_{j}]=-(-1)^{(deg\widetilde{X}_{i})(deg\widetilde{X}_{j})}[\widetilde{X}_{j}, \widetilde{X}_{i}], \nonumber
\end{eqnarray}
\begin{eqnarray}
[\widetilde{X}_{i}, X_{j}]&=&-(-1)^{(deg\widetilde{X}_{i})(degX_{j})}[X_{j}, \widetilde{X}_{i}], \nonumber
\end{eqnarray}
\begin{eqnarray}
[X_{i},[X_{j},X_{k}]]=[[X_{i},X_{j}],X_{k}]+(-1)^{(degX_{i})(degX_{j})}[X_{j},[X_{i},X_{k}]], \nonumber
\end{eqnarray}
\begin{eqnarray}
[\widetilde{X}_{i},[\widetilde{X}_{j},\widetilde{X}_{k}]]=[[\widetilde{X}_{i},\widetilde{X}_{j}],\widetilde{X}_{k}]+(-1)^{(deg\widetilde{X}_{i})(deg\widetilde{X}_{j})}[\widetilde{X}_{j},[\widetilde{X}_{i},\widetilde{X}_{k}]], \nonumber
\end{eqnarray}
\begin{eqnarray}
[X_{i}^{\mathfrak{D}},[X_{j}^{\mathfrak{D}},X_{k}^{\mathfrak{D}}]]=[[X_{i}^{\mathfrak{D}},X_{j}^{\mathfrak{D}}],X_{k}^{\mathfrak{D}}]+(-1)^{(degX_{i}^{\mathfrak{D}})(degX_{j}^{\mathfrak{D}})}[X_{j}^{\mathfrak{D}},[X_{i}^{\mathfrak{D}},X_{k}^{\mathfrak{D}}]], \nonumber
\end{eqnarray}
\begin{eqnarray}
\left[X_{i}, X_{j}\right]&=&C_{ij}^{k}X_{k}, \nonumber \\
\left[\widetilde{X}_{i}, \widetilde{X}_{j}\right]&=&-C_{ij}^{k}\widetilde{X}_{k}, \nonumber \\
\left[\widetilde{X}_{i},{X_{j}}\right]&=&0,
\end{eqnarray}
where $(deg X_{i})$ denotes the degree of $X_{i}$ and $X_{i}^{\mathfrak{D}}=X_{i}$ or $\widetilde{X}_{i}$ so that $[X_{i}^{\mathfrak{D}},[X_{j}^{\mathfrak{D}},X_{k}^{\mathfrak{D}}]]=[\widetilde{X}_{i},[X_{j},X_{k}]]$ or $[X_{i},[\widetilde{X}_{j},X_{k}]]$ or $[\widetilde{X}_{i},[X_{j},\widetilde{X}_{k}]],...$.
\end{definition}

\begin{definition}
The thermal Lie superalgebra is a superalgebra $G^{\mathfrak{T}}=G_{0}^{\mathfrak{T}}\oplus G_{1}^{\mathfrak{T}}$ with a product $[ \ \ , \ \ ]$ satisfying the following axioms
\begin{eqnarray}
[X_{i}, X_{j}]&=&-(-1)^{(degX_{i})(degX_{j})}[X_{j}, X_{i}], \nonumber
 \end{eqnarray}
\begin{eqnarray}
[\widehat{X}_{i}, \widehat{X}_{j}]=-(-1)^{(deg\widehat{X}_{i})(deg\widehat{X}_{j})}[\widehat{X}_{j}, \widehat{X}_{i}], \nonumber
\end{eqnarray}
\begin{eqnarray}
[\widehat{X}_{i}, X_{j}]&=&-(-1)^{(deg\widehat{X}_{i})(degX_{j})}[X_{j}, \widehat{X}_{i}], \nonumber
 \end{eqnarray}
\begin{eqnarray}
[X_{i},[X_{j},X_{k}]]=[[X_{i},X_{j}],X_{k}]+(-1)^{(degX_{i})(degX_{j})}[X_{j},[X_{i},X_{k}]], \nonumber
\end{eqnarray}
\begin{eqnarray}
[\widehat{X}_{i},[\widehat{X}_{j},\widehat{X}_{k}]]=[[\widehat{X}_{i},\widehat{X}_{j}],\widehat{X}_{k}]+(-1)^{(deg\widehat{X}_{i})(deg\widehat{X}_{j})}[\widehat{X}_{j},[\widehat{X}_{i},\widehat{X}_{k}]], \nonumber
\end{eqnarray}
\begin{eqnarray}
[X_{i}^{\mathfrak{T}},[X_{j}^{\mathfrak{T}},X_{k}^{\mathfrak{T}}]]=[[X_{i}^{\mathfrak{T}},X_{j}^{\mathfrak{T}}],X_{k}^{\mathfrak{T}}]+(-1)^{(degX_{i}^{\mathfrak{T}})(degX_{j}^{\mathfrak{T}})}[X_{j}^{\mathfrak{T}},[X_{i}^{\mathfrak{T}},X_{k}^{\mathfrak{T}}]], \nonumber
\end{eqnarray}
\begin{eqnarray}
\left[X_{i}, X_{j}\right]&=&C_{ij}^{k}X_{k}, \nonumber \\
\left[\widehat{X}_{i}, \widehat{X}_{j}\right]&=&C_{ij}^{k}\widehat{X}_{k}, \nonumber \\
\left[\widehat{X}_{i},{X_{j}}\right]&=&C_{ij}^{k}X_{k},
\end{eqnarray}
where $(deg X_{i})$ denotes the degree of $X_{i}$ and $X_{i}^{\mathfrak{T}}=\widehat{X}_{i}$ or $\widetilde{X}_{i}$ so that $[X_{i}^{\mathfrak{T}},[X_{j}^{\mathfrak{T}},X_{k}^{\mathfrak{T}}]]=[\widehat{X}_{i},[X_{j},X_{k}]]$ or $[X_{i},[\widehat{X}_{j},X_{k}]]$ or $[\widehat{X}_{i},[X_{j},\widehat{X}_{k}]],...$.
\end{definition}

\begin{theorem}
Let $H=G \oplus \widetilde{G}$ be a direct sum of two isomorphic Lie superalgebras $G$ and $\widetilde{G}$. Then there is two subalgberas $H_{\mathfrak{T}}$ and $H_{\mathfrak{D}}$ isomorphic to superalgebras $G_{\mathfrak{T}}$ and $G_{\mathfrak{D}}$, respectively.
\end{theorem}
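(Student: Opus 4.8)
The plan is to mimic the proof of Proposition~1 almost verbatim, replacing ordinary Lie algebras by Lie superalgebras and ordinary commutators by the graded bracket, while keeping careful track of the $\mathbb{Z}_2$-grading throughout. Concretely, I would write $H = G \oplus \widetilde G$, pick homogeneous generators $X_i$ of $G$ with $\widetilde X_i$ the corresponding generators of $\widetilde G$ under the fixed isomorphism, and set $\deg X_i = \deg \widetilde X_i$ so the grading on $H$ is the obvious one. Then I define $B_i^{\mathfrak g} = (X_i, 0)$, $B_i^{\widetilde{\mathfrak g}} = (0, \widetilde X_i)$, and the hat generators $\widehat B_i = (X_i, 0) - (0, \widetilde X_i)$, each declared homogeneous of the same degree as $X_i$.

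Next I would verify that $H_{\mathfrak T} := \mathrm{span}\{B_j^{\mathfrak g}, \widehat B_j\}$ is a sub-superalgebra by computing the three families of brackets exactly as in Proposition~1: using componentwise evaluation of the bracket on $G \oplus \widetilde G$ together with the relation $[\widetilde X_i, \widetilde X_j] = -C_{ij}^k \widetilde X_k$ (here $C_{ij}^k$ denotes the graded structure constants, possibly (anti)symmetric depending on degrees), one finds $[\widehat B_i, \widehat B_j] = C_{ij}^k \widehat B_k$, $[\widehat B_i, B_j^{\mathfrak g}] = C_{ij}^k B_k^{\mathfrak g}$, and $[B_i^{\mathfrak g}, B_j^{\mathfrak g}] = C_{ij}^k B_k^{\mathfrak g}$, which is precisely the defining relation set of $G_{\mathfrak T}$. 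I would then check that the graded antisymmetry axioms and the graded Jacobi identities listed in Definition~5 hold on $H_{\mathfrak T}$: graded antisymmetry is inherited componentwise from $G$ and $\widetilde G$, and the graded Jacobi identities (including the mixed ones $[X_i^{\mathfrak T}, [X_j^{\mathfrak T}, X_k^{\mathfrak T}]] = \cdots$) follow because $H$ itself is a Lie superalgebra and the hat elements are honest elements of $H$, so every identity valid in $H$ restricts to the subspace. Finally I define $\psi : H_{\mathfrak T} \to G_{\mathfrak T}$ by $\psi(B_j^{\mathfrak g}) = X_j$, $\psi(\widehat B_j) = \widehat X_j$ and its inverse $\varphi$ by $\varphi(X_j) = B_j^{\mathfrak g}$, $\varphi(\widehat X_j) = \widehat B_j$, check that $\psi$ and $\varphi$ are graded (degree-preserving) homomorphisms of superalgebras by testing them on the three bracket relations, observe $\psi \circ \varphi = \mathrm{id}_{G_{\mathfrak T}}$ and $\varphi \circ \psi = \mathrm{id}_{H_{\mathfrak T}}$, and conclude $H_{\mathfrak T} \cong G_{\mathfrak T}$. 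The same argument with $\widetilde B_j := (0, \widetilde X_j)$ in place of $\widehat B_j$ and the sign $-C_{ij}^k$ in the tilde bracket gives $H_{\mathfrak D} \cong G_{\mathfrak D}$.

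The only real subtlety — and the step I expect to require the most care — is bookkeeping of signs: in the super case the structure constants $C_{ij}^k$ carry graded (anti)symmetry in $i,j$, and the mixed graded Jacobi identities in Definition~5 involve sign factors $(-1)^{(\deg X_i^{\mathfrak D})(\deg X_j^{\mathfrak D})}$ that must be shown to be consistent under the componentwise bracket on $G \oplus \widetilde G$. The key observation that makes this painless is that $\widehat B_i$ and $B_i^{\mathfrak g}$ are genuine homogeneous elements of the Lie superalgebra $H = G \oplus \widetilde G$ (whose graded Jacobi identity holds by hypothesis), so one does not re-derive any identity from scratch: one only needs to check closure of the relevant spans under the bracket and that $\psi$, $\varphi$ preserve degree and bracket. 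I would present the bracket computations in three short displayed lines each (as in Proposition~1) and leave the graded-antisymmetry and graded-Jacobi verifications to a single remark that they are inherited from $H$.
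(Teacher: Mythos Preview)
Your proposal is correct and follows essentially the same approach as the paper's own proof: define $\widehat{B}_i=(Y_i,-\widetilde Y_i)$, $B_i^{G}=(Y_i,0)$, $B_i^{\widetilde G}=(0,\widetilde Y_i)$ inside $H=G\oplus\widetilde G$, verify closure of the spans under the componentwise graded bracket, and exhibit the obvious degree-preserving maps $\Psi,\Phi$ on generators as isomorphisms. The paper in fact writes even less than you do, explicitly invoking ``the same steps as the proof of Proposition~1'' and only recording the definitions of $\widehat B_i$, $B_i^G$, $B_i^{\widetilde G}$ and of $\Psi,\Phi$; your additional remark that graded antisymmetry and the mixed graded Jacobi identities are inherited because the hat elements are genuine homogeneous elements of the ambient Lie superalgebra $H$ is a welcome clarification that the paper leaves implicit.
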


\begin{proof}
As $G$ and $\widetilde{G}$ are Lie superalgebras $H=G_{0} \oplus G_{1} \oplus \widetilde{G}_{0} \oplus \widetilde{G}_{1}$. We set $B_{ij}=(Y_{i},\widetilde{Y}_{j})\in H$, with $Y_{i} \in G_{0} \oplus G_{1}$ and $\widetilde{Y_{j}} \in \widetilde{G}_{0} \oplus \widetilde{G}_{1}$. Thus the direct sum is defined in the usual way \cite{Kac}:
\begin{eqnarray}
[B_{ij},B_{kl}]&=&[(Y_{i}, \widetilde{Y}_{j}),(Y_{k}, \widetilde{Y}_{l})] \nonumber \\
&=&\left([Y_{i},Y_{k}], [\widetilde{Y}_{j}, \widetilde{Y}_{l}]\right) \nonumber \\
&=&\left(-(-1)^{(deg Y_{i})(deg Y_{k})}[Y_{k},Y_{i}],-(-1)^{(deg \widetilde{Y}_{l})(deg \widetilde{Y}_{j})}[\widetilde{Y}_{j}, \widetilde{Y}_{l}]\right).
\end{eqnarray}
If we set $\widehat{B}_{i}=(Y_{i}, -\widetilde{Y}_{i})$, $B_{i}^{G}=(Y_{i},0)$, $B_{i}^{\widetilde{G}}=(0, -\widetilde{X}_{i})$, following the same steps as the proof of Proposition 1, we have that $H_{\mathfrak{T}}=span\{\widehat{B}_{i},B_{i}^{G}\}$ and $H_{\mathfrak{D}}=span\{B_{i}^{G},B_{i}^{\widetilde{G}}\}$ are isomorphic to thermal superalgebras $G^{\mathfrak{T}}$ and $G^{\mathfrak{D}}$, respectively, with the isomorphism $\Psi: H_{\mathfrak{T}} \rightarrow G^{\mathfrak{T}}$ given by $\Psi(\widehat{B}_{i})=\widehat{X}_{i}$, $\Psi(B_{i}^{G})=X_{i}$ and the isomorphism $\Phi: H_{\mathfrak{D}} \rightarrow G^{\mathfrak{D}}$ defined by  $\Phi(\widehat{B}_{i})=\widehat{X}_{i}$ and  $\Phi(B_{i}^{\widetilde{G}})=\widetilde{X}_{i}$.
\end{proof}

\begin{theorem}
Let  $U(\mathfrak{h})$ be an universal enveloping superalgebra of Lie superalgebra $\mathfrak{h}$ and $\rho(U(\mathfrak{h}))\rightarrow \mathfrak{gl}(V)$ a representation of $U(\mathfrak{h})$ on $V$. Then $\rho$ induces a representation of an universal enveloping superalgebra $U(\mathfrak{h_{T}})$ of thermal Lie superalgebra $\mathfrak{h_{T}}$, denoted by $\rho^{'}(U(\mathfrak{h_{T}}))$, with $U(\mathfrak{h_{T}})=\frac{T(\mathfrak{h_{T}})}{I}$, where $T(\mathfrak{h_{T}})$ is the tensor superalgebra over the space $\mathfrak{h_{T}}$ and $I$ is the ideal generated by $[X_{i}^{\mathfrak{T}},X_{j}^{\mathfrak{T}}]-X_{i}^{\mathfrak{T}} \otimes X_{j}^{\mathfrak{T}} + (-1)^{(deg X_{i}^{\mathfrak{T}})(deg X_{j}^{\mathfrak{T}})} X_{j}^{\mathfrak{T}} \otimes X_{i}^{\mathfrak{T}}$, ($X_{i}^{\mathfrak{T}}=X_{i}$ or $X_{i}^{\mathfrak{T}}=\widehat{X}_{i}$).
\end{theorem}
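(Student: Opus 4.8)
The plan is to run the standard three-step argument: first to pin down what $U(\mathfrak{h_T})$ is, then to produce from $\rho$ a representation of the Lie superalgebra $\mathfrak{h_T}$ itself, and finally to extend it along the universal property to $U(\mathfrak{h_T})$. For the first step, I would note that by the preceding theorem $\mathfrak{h_T}$ is isomorphic to a sub-superalgebra of $\mathfrak{h}\oplus\widetilde{\mathfrak{h}}$, so it is a genuine $\mathbb{Z}_{2}$-graded Lie superalgebra, with $\widehat{X}_{i}$ carrying the same degree as $X_{i}$ (the grading inherited from that realization, which is also the one making the graded-antisymmetry axiom compatible with $[\widehat{X}_{i},\widehat{X}_{j}]=C_{ij}^{k}\widehat{X}_{k}$). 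For any Lie superalgebra the universal enveloping superalgebra exists and equals $T(\mathfrak{h_T})/I$, where $I$ is the two-sided ideal generated by the elements $X\otimes Y-(-1)^{(deg X)(deg Y)}Y\otimes X-[X,Y]$ with $X,Y\in\mathfrak{h_T}$ homogeneous; because $\{X_{i},\widehat{X}_{i}\}$ is a homogeneous basis, this ideal is already generated by the subfamily with $X,Y$ among the basis elements, that is by $X_{i}^{\mathfrak{T}}\otimes X_{j}^{\mathfrak{T}}-(-1)^{(deg X_{i}^{\mathfrak{T}})(deg X_{j}^{\mathfrak{T}})}X_{j}^{\mathfrak{T}}\otimes X_{i}^{\mathfrak{T}}-[X_{i}^{\mathfrak{T}},X_{j}^{\mathfrak{T}}]$ with $X_{i}^{\mathfrak{T}}\in\{X_{i},\widehat{X}_{i}\}$, which up to an irrelevant overall sign are the generators of $I$ quoted in the statement. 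These being $\mathbb{Z}_{2}$-homogeneous, $I$ is a graded ideal, so $U(\mathfrak{h_T})$ inherits the superalgebra grading of $T(\mathfrak{h_T})$.

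For the second step, I would regard $\rho$ as a representation of $\mathfrak{h}$ via $\mathfrak{h}\subset U(\mathfrak{h})$, and on a second copy $\widetilde{V}$ of $V$ define the tilde representation of $\widetilde{\mathfrak{h}}$ by $\widetilde{\rho}(\widetilde{X}_{i}):=-\rho(X_{i})$; a short calculation confirms this respects both the graded bracket and the sign-flipped structure constants $-C_{ij}^{k}$. On the super tensor product $V\otimes\widetilde{V}$ I would let $\mathfrak{h}$ act through $X\mapsto\rho(X)\otimes 1$ and $\widetilde{\mathfrak{h}}$ through $\widetilde{X}\mapsto 1\otimes\widetilde{\rho}(\widetilde{X})$, using the Koszul sign rule for products of homogeneous operators; the two Koszul signs then cancel in the cross terms, so $\rho(X)\otimes 1$ and $1\otimes\widetilde{\rho}(\widetilde{Y})$ graded-commute and we obtain a representation of $\mathfrak{h}\oplus\widetilde{\mathfrak{h}}$. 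Pulling this back along the isomorphism of the preceding theorem that identifies $\mathfrak{h_T}$ with a sub-superalgebra of $\mathfrak{h}\oplus\widetilde{\mathfrak{h}}$ gives a Lie-superalgebra homomorphism $\rho'\colon\mathfrak{h_T}\to\mathfrak{gl}(V\otimes\widetilde{V})$ with $\rho'(X_{i})=\rho(X_{i})\otimes 1$ and $\rho'(\widehat{X}_{i})=\rho(X_{i})\otimes 1+1\otimes\rho(X_{i})$. (If $V$ itself is wanted as carrier space, one may instead shortcut to $\rho'=\rho\circ\pi$, where $\pi\colon\mathfrak{h_T}\to\mathfrak{h}$ is the obvious surjective homomorphism $X_{i}\mapsto X_{i}$, $\widehat{X}_{i}\mapsto X_{i}$; the three bracket relations of $\mathfrak{h_T}$ then hold automatically because they hold in $\mathfrak{h}$.)

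For the third step, I would extend $\rho'$ multiplicatively to the tensor superalgebra, $X_{i_{1}}^{\mathfrak{T}}\otimes\cdots\otimes X_{i_{n}}^{\mathfrak{T}}\mapsto\rho'(X_{i_{1}}^{\mathfrak{T}})\cdots\rho'(X_{i_{n}}^{\mathfrak{T}})$. Since $\rho'$ is a homomorphism of Lie superalgebras, this map sends each generator $X_{i}^{\mathfrak{T}}\otimes X_{j}^{\mathfrak{T}}-(-1)^{(deg X_{i}^{\mathfrak{T}})(deg X_{j}^{\mathfrak{T}})}X_{j}^{\mathfrak{T}}\otimes X_{i}^{\mathfrak{T}}-[X_{i}^{\mathfrak{T}},X_{j}^{\mathfrak{T}}]$ of $I$ to $\rho'(X_{i}^{\mathfrak{T}})\rho'(X_{j}^{\mathfrak{T}})-(-1)^{(deg X_{i}^{\mathfrak{T}})(deg X_{j}^{\mathfrak{T}})}\rho'(X_{j}^{\mathfrak{T}})\rho'(X_{i}^{\mathfrak{T}})-\rho'([X_{i}^{\mathfrak{T}},X_{j}^{\mathfrak{T}}])=0$, hence it kills $I$, and therefore descends to a homomorphism of associative superalgebras $\rho'\colon U(\mathfrak{h_T})\to\mathrm{End}(V\otimes\widetilde{V})$, the induced representation $\rho'(U(\mathfrak{h_T}))$; it is the unique homomorphism restricting to $\rho'|_{\mathfrak{h_T}}$ on $\mathfrak{h_T}$ because $\mathfrak{h_T}$ generates $U(\mathfrak{h_T})$.

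I expect the one genuinely delicate point to be the graded-sign bookkeeping in the second step: that $\widetilde{\rho}$ is consistent with the flipped structure constants, and — more importantly — that the images of $\mathfrak{h}$ and $\widetilde{\mathfrak{h}}$ really do graded-commute on $V\otimes\widetilde{V}$. The remaining checks, namely the $\mathbb{Z}_{2}$-homogeneity of $I$ (needed so the quotient is a superalgebra, not merely an associative algebra) and the vanishing of the multiplicative extension on $I$, are routine.
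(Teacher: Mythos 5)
Your construction is correct and essentially coincides with the paper's own proof: you define $\rho'(X_{i})=\rho(X_{i})\otimes 1$ and $\rho'(\widehat{X}_{i})=\rho(X_{i})\otimes 1+1\otimes\rho(X_{i})$ on a graded tensor product with the Koszul sign rule, exactly as the paper does, and the substantive work is the same sign bookkeeping that verifies the thermal (anti)commutation relations. The additional scaffolding you supply --- realizing $\mathfrak{h_{T}}$ inside $\mathfrak{h}\oplus\widetilde{\mathfrak{h}}$ via $\widetilde{\rho}(\widetilde{X}_{i})=-\rho(X_{i})$ and then invoking the universal property of $T(\mathfrak{h_{T}})/I$ --- merely repackages the paper's direct check that the defining relations (hence the ideal $I$) are respected, so it is the same argument in a more formal wrapper.
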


\begin{proof}
We define
\begin{eqnarray}
\rho^{'}(\widehat{X_{i}})=\rho(X_{i}) \otimes 1 + 1 \otimes \rho(X_{i})
\end{eqnarray}
and
\begin{eqnarray}
\rho^{'} (X_{i})= \rho(X_{i}) \otimes 1.
\end{eqnarray}
We consider $(a_{i} \otimes b_{i})(a_{j} \otimes b_{j})=(-1)^{(deg b_{i}) (deg b_{j})}(a_{i}a_{j} \otimes b_{i}b_{j})$ if $(deg X_{i})=(deg X_{j})=1$ and $(a_{i} \otimes b_{i})(a_{j} \otimes b_{j})=(a_{i}a_{j} \otimes b_{i}b_{j})$, otherwise. Thus in the first case $[X_{i},X_{j}]=[X_{j},X_{i}]\equiv \{X_{i},X_{j}\} \equiv X_{i},X_{j}+X_{j},X_{i}$. Therefore
\begin{eqnarray}
\{\rho(\widehat{X}_{i}), \rho(\widehat{X}_{j})\}&=&\rho(X_{i})\rho(X_{j}) \otimes 1 + \rho(X_{i}) \otimes \rho (X_{j}) - \rho(X_{j}) \otimes \rho(X_{i})+ 1 \otimes \rho(X_{j}) \rho(X_{i}) \nonumber \\
&+&\rho(X_{j})\rho(X_{i}) \otimes 1 + \rho(X_{j}) \otimes \rho (X_{i}) - \rho(X_{i}) \otimes \rho(X_{j})+ 1 \otimes \rho(X_{i}) \rho(X_{j}) \nonumber \\
&=&C_{ij}^{k}(\rho(X_{k})\otimes 1+ 1 \otimes \rho(X_{k}))= C_{ij}^{k} \rho(\widehat{X}_{k}). \nonumber
\end{eqnarray}
Moreover $\{\rho(\widehat{X}_{i}), \rho(X_{j})\}=C_{ij}^{k}\rho(X_{k})$, $\{\rho(X_{i}), \rho(X_{j})\}=C_{ij}^{k}\rho(X_{k})$, $[\rho(\widehat{X}_{i}), \rho(\widehat{X}_{j})]=C_{ij}^{k}\rho(\widehat{X}_{k})$, $[\rho(\widehat{X}_{i}), \rho(X_{j})]=C_{ij}^{k}\rho(X_{k})$ and $[\rho(X_{i}), \rho(X_{j})]=C_{ij}^{k}\rho(X_{k})$. This complets the proof of theorem.
\end{proof}

\section{Thermal Lie-Poincaré superalgebra and non-commutativity}
 We review the the the Thermal Lie Poincaré algebra and the Poincaré superalgebra in order to introduce the thermal Poincaré superalgebra. The thermal Lie-Poincaré algebra is given by \cite{Khanna,Santana}:

\begin{eqnarray}
\left[P_{\mu},P_{\nu}\right]&=&0, \nonumber \\
\left[M_{\mu\nu}, M_{\sigma \rho}\right]&=&-i(\eta_{\mu \sigma}M_{\nu\sigma}-\eta_{\nu \rho}M_{\mu\sigma}+\eta_{\mu \sigma}M_{\rho\nu}-\eta_{\nu \sigma}M_{\rho\nu}), \nonumber \\
\left[M_{\mu \nu}, P_{\sigma}\right]&=&i(\eta_{\mu\sigma}P_{\mu}-\eta_{\sigma,\mu}P_{\nu}), \nonumber \\
\left[\widehat{P}_{\mu},\widehat{P}_{\nu}\right]&=&0, \nonumber \\
\left[\widehat{M}_{\mu\nu}, \widehat{M}_{\sigma \rho}\right]&=&-i(\eta_{\mu \sigma}\widehat{M}_{\nu\sigma}-\eta_{\nu \rho}\widehat{M}_{\mu\sigma}+\eta_{\mu \sigma}\widehat{M}_{\rho\nu}-\eta_{\nu \sigma}\widehat{M}_{\rho\nu}), \nonumber \\
\left[\widehat{M}^{\mu\nu},\widehat{P}_{\sigma}\right]&=&i(\eta_{\mu\sigma}\widehat{P}_{\mu}-\eta_{\sigma,\mu}\widehat{P}_{\nu}), \nonumber \\
\left[\widehat{P}_{\mu},P_{\nu}\right]&=&0, \nonumber \\
\left[\widehat{M}_{\mu\nu}, M_{\sigma \rho}\right]&=&-i(\eta_{\mu \sigma}M_{\nu\sigma}-\eta_{\nu \rho}M_{\mu\sigma}+\eta_{\mu \sigma}M_{\rho\nu}-\eta_{\nu \sigma}M_{\rho\nu}, \nonumber \\
\left[\widehat{M}^{\mu\nu},P_{\sigma}\right]&=&i(\eta_{\mu\sigma}P_{\mu}-\eta_{\sigma,\mu}P_{\nu}),
\end{eqnarray}
where $P_{\mu}$ stands for the generators of translation, $M_{\mu\nu}$ are generators of rotations, $\eta_{\mu\nu}$ is such that $diag(\eta_{\mu\nu})=(1,-1,-1,-1)$ and $\eta_{\mu \nu}=0$ for $\mu \neq \nu$, where $\mu, \nu=0,1,2,3$.
The Poincaré superalgebra is given by \cite{Wess}

\begin{eqnarray}
\{Q_{a},Q_{b}\}&=&-2(\gamma^{\mu}C)_{ab}P_{\mu}, \nonumber \\
\{\overline{Q}_{a},\overline{Q}_{b}\}&=&-2(C^{-1}\gamma^{\mu})_{ab}P_{\mu}, \nonumber \\
\{Q_{a},\overline{Q}_{b}\}&=&2(\gamma^{\mu})_{ab}P_{\mu}, \nonumber \\
\left[P_{\mu},Q_{a}\right]&=&0, \nonumber \\
\left[M_{\mu\nu}, Q_{a}\right]&=&-(\sigma_{\mu\nu}^{4})_{ab}Q_{b}, \nonumber \\
\left[P_{\mu},P_{\nu}\right]&=&0, \nonumber \\
\left[M_{\mu\nu}, M_{\rho \sigma}\right]&=&-i(\eta_{\mu \rho}M_{\nu\rho}-\eta_{\nu \sigma}M_{\mu\rho}+\eta_{\mu \rho}M_{\sigma\nu}-\eta_{\nu \rho}M_{\sigma\nu}), \nonumber \\
\left[M_{\mu\nu},P_{\rho}\right]&=&-i(\eta_{\mu\rho}P_{\mu}-\eta_{\rho,\mu}P_{\nu}), \label{sP}
\end{eqnarray}
where $\sigma_{\mu\nu}^{4}=\frac{i}{4}\left[\gamma_{\mu},\gamma_{\nu}\right]$ with indices $a$, $b$ run from 1 to 4, $\gamma_{\mu}$ are Dirac matrices, $Q_{a}$ are Majorana spinors (spinor charge) and $C$ is the charge conjugation matrix.

Based on the previous prescription (Theorem 1) and in the definition $\widehat{Q}=Q-\widetilde{Q}$ for the thermal Majorana spinors we can derive the following additional relations to Poincaré superalgebra (\ref{sP}), obtaining the thermal superalgebra
\begin{eqnarray}
\{\widehat{Q}_{a},\widehat{Q}_{b}\}&=&-2(\gamma^{\mu}C)_{ab}\widehat{P}_{\mu}, \nonumber \\
\{\widehat{\overline{Q}}_{a},\widehat{\overline{Q}}_{b}\}&=&-2(C^{-1}\gamma^{\mu})_{ab}\widehat{P}_{\mu}, \nonumber \\
\{\widehat{Q}_{a},\widehat{\overline{Q}}_{b}\}&=&2(\gamma^{\mu})_{ab}\widehat{P}_{\mu}, \nonumber \\
\left[\widehat{P}_{\mu},\widehat{P}_{\nu}\right]&=&0, \nonumber \\
\left[\widehat{M}_{\mu\nu}, \widehat{M}_{\rho \sigma}\right]&=&-i(\eta_{\mu \rho}\widehat{M}_{\nu\rho}-\eta_{\nu \sigma}\widehat{M}_{\mu\rho}+\eta_{\mu \rho}\widehat{M}_{\sigma\nu}-\eta_{\nu \rho}\widehat{M}_{\sigma\nu}), \nonumber \\
\left[\widehat{M}_{\mu\nu},\widehat{P}_{\rho}\right]&=&-i(\eta_{\mu\rho}\widehat{P}_{\mu}-\eta_{\rho,\mu}\widehat{P}_{\nu}), \nonumber \\
\{\widehat{Q}_{a},Q_{b}\}&=&-2(\gamma^{\mu}C)_{ab}P_{\mu}, \nonumber \\
\left[\widehat{P}_{\mu},Q_{a}\right]&=&0 \nonumber, \\
\left[\widehat{M}_{\mu\nu}, Q_{a}\right]&=&-(\sigma_{\mu\nu}^{4})_{ab}Q_{b}, \nonumber \\
\left[\widehat{M}_{\mu\nu}, M_{\rho \sigma}\right]&=&-i(\eta_{\mu \rho}M_{\nu\rho}-\eta_{\nu \sigma}M_{\mu\rho}+\eta_{\mu \rho}M_{\sigma\nu}-\eta_{\nu \rho}M_{\sigma\nu}), \nonumber \\
\left[\widehat{M}_{\mu\nu},P_{\rho}\right]&=&-i(\eta_{\mu\rho}P_{\mu}-\eta_{\rho,\mu}P_{\nu}).
\end{eqnarray}
It is possible to express the thermal supersymmetric extension of the Poincaré algebra in two-component Weyl formulation resulting in
\begin{eqnarray}
\{Q_{\alpha},\overline{Q}_{\dot{\beta}}\}&=&2\sigma^{\mu}_{\alpha\dot{\beta}}P_{\mu}, \nonumber \\
\left[P_{\mu}, Q_{\alpha}\right]&=&0, \nonumber \\
\left[P_{\mu}, \overline{Q}_{\dot{\alpha}}\right]&=&0, \nonumber \\
\left[M_{\mu\nu},Q^{\dot{\alpha}}\right]&=&i(\sigma_{\mu\nu})_{\alpha}^{\beta}Q_{\beta}, \nonumber \\
\left[M_{\mu\nu},\overline{Q}^{\dot{\alpha}}\right]&=&i(\sigma_{\mu\nu})_{\alpha}^{\beta}Q_{\beta}, \nonumber \\
\{\widehat{Q}_{\alpha},\widehat{\overline{Q}}_{\dot{\beta}}\}&=&2\sigma^{\mu}_{\alpha\dot{\beta}}\widehat{P}_{\mu}, \nonumber \\
\left[\widehat{P}_{\mu}, \widehat{Q}_{\alpha}\right]&=&0, \nonumber \\
\left[\widehat{P}_{\mu}, \widehat{\overline{Q}}_{\dot{\alpha}}\right]&=&0, \nonumber \\
\left[\widehat{M}_{\mu\nu},\widehat{Q}^{\dot{\alpha}}\right]&=&i(\sigma_{\mu\nu})_{\alpha}^{\beta}\widehat{Q}_{\beta}, \nonumber \\
\left[\widehat{M}_{\mu\nu},\widehat{\overline{Q}}^{\dot{\alpha}}\right]&=&i(\sigma_{\mu\nu})_{\alpha}^{\beta}\widehat{Q}_{\beta}, \nonumber \\
\{\widehat{Q}_{\alpha},\overline{Q}_{\dot{\beta}}\}&=&2\sigma^{\mu}_{\alpha\dot{\beta}}P_{\mu}, \nonumber \\
\left[\widehat{P}_{\mu}, Q_{\alpha}\right]&=&0, \nonumber \\
\left[\widehat{P}_{\mu}, \overline{Q}_{\dot{\alpha}}\right]&=&0, \nonumber \\
\left[\widehat{M}_{\mu\nu},Q^{\dot{\alpha}}\right]&=&i(\sigma_{\mu\nu})_{\alpha}^{\beta}Q_{\beta}, \nonumber \\
\left[\widehat{M}_{\mu\nu},\overline{Q}^{\dot{\alpha}}\right]&=&i(\sigma_{\mu\nu})_{\alpha}^{\beta}Q_{\beta}.
\end{eqnarray}
The other relations are analogous to the previous case. Analogously to the reference \cite{Kobayashi} we can construct a $Z_{2}$-graded Hopf algebra $H$ with the graded tensor product $(a \otimes b)(c \otimes d)=(-1)^{(deg b) (deg c)}(ac \otimes bd$), where $a,b,c,d \in H$ and $(deg \ a)=0$ if $a$ is fermionic and $(deg \ a)=1$, if $a$ is bosonic. The universal enveloping thermal Poincaré superalgebra becomes a $Z_{2}$-Hopf algebra with the following definitions: the multiplication $m(x \otimes y)=xy$, $m(\widehat{x} \otimes \widehat{y})=\widehat{x}\widehat{y}$; $m(\widehat{x} \otimes y)=\widehat{x}y$, coproduct $\Delta(x)=x \otimes 1 + 1 \otimes x$, $\Delta(\widehat{x})=\widehat{x} \otimes 1 + 1 \otimes \widehat{x}$; counit $\epsilon(x)=\epsilon(\widehat{x})=0$; $\epsilon(1)=1$ and antipode $S(x)=-x$, $S(\widehat{x})=-\widehat{x}$, $S(1)=1$. For example,
\begin{eqnarray}
\Delta(\{\widehat{Q}_{\alpha},\overline{Q}_{\dot{\beta}}\})&=&2 \sigma_{\alpha \dot{\beta}}(P_{\mu}\otimes1+1 \otimes P_{\mu}) \nonumber \\
&=&\Delta(2\sigma_{\alpha \dot{\beta}}P_{\mu}) \nonumber \\
&=&\Delta(\widehat{Q}_{\alpha})\Delta(\overline{Q}_{\dot{\beta}})+\Delta(\overline{Q}_{\dot{\beta}})\Delta(\widehat{Q}_{\alpha}).
\end{eqnarray}
Now we consider the thermal twist elements:
\begin{eqnarray}
\mathfrak{F}^{T}=\exp\left[-\frac{1}{2}K^{\alpha \beta}(Q_{\alpha}\otimes Q_{\beta}+\widehat{Q}_{\alpha}\otimes \widehat{Q}_{\beta})\right],
\end{eqnarray}
where $K^{\alpha \beta}$, that satisfies the conditions:
\begin{eqnarray}
\mathfrak{F}_{12}^{T}(\Delta \otimes id)(\mathfrak{F}^{T})&=&\mathfrak{F}_{23}^{T}(id \otimes \Delta)(\mathfrak{F}^{T}) \nonumber \\
(\epsilon \otimes id) (\mathfrak{F}^{T})&=&(id \otimes \epsilon)(\mathfrak{F}^{T}).
\end{eqnarray}

After the twisting we have the deformed multiplication
\begin{eqnarray}
m_{\mathfrak{F}^{T}}(v_{T} \otimes w_{T}) &\equiv& v_{T} \star w_{T} \nonumber \\
&=&m\left((\mathfrak{F}^{T})^{-1}\triangleright(v_{T} \otimes w_{T})\right),
\end{eqnarray}
considering $v_{T},w_{T} \in V_{T}\equiv V \oplus V_{T}$ a left module of H and $\triangleright$ denotes the action on $V_{T}$. We can check that
\begin{eqnarray}
h\triangleright(m_{\mathfrak{F}^{T}}(v_{T} \otimes w_{T}))&=&h\triangleright(m_{\mathfrak{F}^{T}}((v+\hat{v})\otimes(w+\hat{w})) \nonumber \\
&=&h\triangleright(m_{\mathfrak{F}^{T}}(v \otimes w))+h\triangleright(m_{\mathfrak{F}^{T}}(v \otimes \widehat{w})) \nonumber \\
&+h&\triangleright(m_{\mathfrak{F}^{T}}(\widehat{v} \otimes w))+h\triangleright(m_{\mathfrak{F}^{T}}(\widehat{v} \otimes \widehat{w})) \nonumber \\
&=&m_{\mathfrak{F}^{T}}(\Delta(h)(\mathfrak{F}^{T})^{-1}\triangleright (v \otimes w)) +m_{\mathfrak{F}^{T}}(\Delta(h)(\mathfrak{F}^{T})^{-1}\triangleright (v \otimes \widehat{w})) \nonumber \\
&+&m_{\mathfrak{F}^{T}}(\Delta(h)(\mathfrak{F}^{T})^{-1}\triangleright (\widehat{v} \otimes w)) +m_{\mathfrak{F}^{T}}(\Delta(h)(\mathfrak{F}^{T})^{-1}\triangleright (\widehat{v} \otimes \widehat{w})) \nonumber \\
&=&m_{\mathfrak{F}^{T}}(\Delta^{\mathfrak{F}^{T}}(h)\triangleright (v \otimes w)) +m_{\mathfrak{F}^{T}}(\Delta^{\mathfrak{F}^{T}}(h)\triangleright (v \otimes \widehat{w})) \nonumber \\
&+&m_{\mathfrak{F}^{T}}(\Delta^{\mathfrak{F}^{T}}(h)\triangleright (\widehat{v} \otimes w)) +m_{\mathfrak{F}^{T}}(\Delta^{\mathfrak{F}^{T}}(h)\triangleright (\widehat{v} \otimes \widehat{w})),
\end{eqnarray}
where $\Delta^{\mathfrak{F}^{T}}(h)=\mathfrak{F}^{T}\Delta(h)(\mathfrak{F}^{T})^{-1}$. Note that $\mathfrak{F}^{T}$ may take the form $\mathfrak{F}^{T}=f^{\alpha}\otimes f_{\alpha}+\widehat{f}^{\alpha}\otimes \widehat{f}_{\alpha}$

Thus we have that
\begin{eqnarray}
\theta^{\alpha} \star \theta^{\beta}&=&m_{\mathfrak{F}^{T}}(\theta^{\alpha} \otimes \theta^{\beta}) \nonumber \\
&=&2 \theta_{\alpha}\theta_{\beta}+K^{\alpha \beta}.
\end{eqnarray}
Therefore
\begin{eqnarray}
\{\theta^{\alpha},\theta^{\beta}\}_{\star}&=&2K^{\alpha \beta}.
\end{eqnarray}
Analogously
\begin{eqnarray}
\left[x^{\mu},x^{\nu}\right]_{\star}&=&K^{\alpha \beta}\sigma_{\alpha \dot{\gamma}}^{\mu}\sigma_{\beta \dot{\delta}}^{\mu}
(\overline{\theta}^{\gamma}\overline{\theta}^{\delta}+\widehat{\overline{\theta}}^{\gamma}\widehat{\overline{\theta}}^{\delta})\nonumber \\
&=&\left[\widehat{x}^{\mu},x^{\nu}\right]_{\star}=\left[\widehat{x}^{\mu},\widehat{x}^{\nu}\right]_{\star}, \nonumber \\
\left[x^{\mu},\theta^{\alpha}\right]_{\star}&=&-2iK^{\alpha \beta}\sigma_{\beta \dot{\gamma}}^{\mu}(\overline{\theta}^{\gamma}+ \widehat{\overline{\theta}}^{\dot{\gamma}})\nonumber \\
&=&\left[\widehat{x}^{\mu},\theta^{\alpha}\right]_{\star}=\left[\widehat{x}^{\mu},\widehat{\theta}^{\alpha}\right]_{\star} =\left[x^{\mu},\widehat{\theta}^{\alpha}\right]_{\star}.
\end{eqnarray}
where we used that
\begin{eqnarray}
Q_{\alpha}&=&i\frac{\partial}{\partial \theta_{\alpha}}-\sigma_{\alpha \dot{\beta}}^{\mu}\overline{\theta}^{\beta}\partial_{\mu},  \nonumber \\
\widehat{Q}_{\alpha}&=&i\frac{\partial}{\partial \widehat{\theta}_{\alpha}}-\sigma_{\alpha \dot{\beta}}^{\mu}\widehat{\overline{\theta}}^{\beta}\partial_{\widehat{\mu}}.
\end{eqnarray}

Therefore, we have a deformation of the (anti)commutation relations for the coordinates.
\section{Thermal M-superalgebra}
In M-theory, the M-superalgebra is given by \cite{Toppan1,Toppan2,Townsend}
\begin{eqnarray}
\{Q'_{r},Q'_{s}\}=(C \Gamma_{\mu})_{rs}P'^{\mu}+(C \Gamma_{[ \mu \nu ]})_{rs}Z'^{\mu \nu}+(C \Gamma_{[ \mu_{1}...\mu_{5} ]})_{rs}Z'^{\mu_{1}...\mu_{5}},
\end{eqnarray}
where $Z'^{\mu \nu}$, $Z'^{\mu_{1}...\mu_{5}}$ are tensorial central charges related to p-branes and
\begin{eqnarray}
\Gamma_{[\mu_{1}...\mu_{t}]}=\frac{1}{t!}\sum_{\sigma \in S_{p}} \epsilon (\sigma) \Gamma_{\mu_{\sigma(1)}}...\Gamma_{\mu_{\sigma(t)}},
\end{eqnarray}
with $ \epsilon (\sigma)$ stands for the signature of permutation $\sigma$ and $S_{p}$ is the symmetric group.
In order to perform TFD prescription and following the same steps of Theorem 2, we explicitly present a realization of the tilde operators:
\begin{eqnarray}
Q_{r}& \equiv &\rho(Q'_{r}) \otimes 1, \ \ \widetilde{Q}_{r} \equiv 1 \otimes \rho(Q'_{r}), \nonumber \\
Q_{s}& \equiv &\rho(Q'_{s}) \otimes 1, \ \ \widetilde{Q}_{b} \equiv 1 \otimes \rho(Q_{\alpha}), \nonumber \\
P^{\mu}& \equiv & \rho(P'^{\mu}) \otimes 1, \ \  \widetilde{P}^{\mu} \equiv 1 \otimes \rho(P'^{\mu}), \nonumber \\
Z^{\mu \nu}& \equiv &\rho(Z'^{\mu \nu}) \otimes 1, \ \ \widetilde{Z}^{\mu \nu} \equiv 1 \otimes \rho(Z'^{\mu \nu}), \nonumber \\
Z^{\mu_{1}...\mu_{5}}& \equiv &\rho(Z'^{\mu_{1}...\mu_{5}}) \otimes 1, \ \ \widetilde{Z}'^{\mu_{1}...\mu_{5}} \equiv 1 \otimes \rho(Z'^{\mu_{1}...\mu_{5}})
\end{eqnarray}
so that
\begin{eqnarray}
\{Q_{r},Q_{s}\}&=&(C \Gamma_{\mu})_{rs}P^{\mu}+(C \Gamma_{[ \mu \nu ]})_{rs}Z^{\mu \nu}+(C \Gamma_{[ \mu_{1}...\mu_{5} ]})_{rs}Z^{\mu_{1}...\mu_{5}}, \nonumber \\
\{\widetilde{Q}_{r},\widetilde{Q}_{s}\}&=&(C \Gamma_{\mu})_{rs}\widetilde{P}^{\mu}+(C \Gamma_{[ \mu \nu ]})_{rs}\widetilde{Z}^{\mu \nu}+(C \Gamma_{[ \mu_{1}...\mu_{5} ]})_{rs}\widetilde{Z}^{\mu_{1}...\mu_{5}}, \nonumber \\
\{\widetilde{Q}_{r},Q_{s}\}&=&0.
\end{eqnarray}
Taking account that
\begin{eqnarray}
\widehat{Q}_{r}& \equiv &\rho(Q'_{r}) \otimes 1 + 1 \otimes \rho(Q'_{r}),  \nonumber \\
\widehat{Q}_{s}& \equiv &\rho(Q'_{s}) \otimes 1 + 1 \otimes \rho(Q'_{s}),  \nonumber \\
\widehat{P}^{\mu}& \equiv & \rho(P'^{\mu}) \otimes 1 + 1 \otimes \rho(P'^{\mu}) ,  \nonumber \\
\widehat{Z}^{\mu \nu}& \equiv &\rho(Z'^{\mu \nu}) \otimes 1 + 1 \otimes \rho(Z'^{\mu \nu}) ,  \nonumber \\
\widehat{Z}^{\mu_{1}...\mu_{5}}& \equiv &\rho(Z'^{\mu_{1}...\mu_{5}}) \otimes 1 + 1 \otimes \rho(Z'^{\mu_{1}...\mu_{5}}),
\end{eqnarray}

in according to Theorem 2, we obtain a representation of thermal M-superalgebra
\begin{eqnarray}
\{Q_{r},Q_{s}\}&=&(C \Gamma_{\mu})_{rs}P^{\mu}+(C \Gamma_{[ \mu \nu ]})_{rs}Z^{\mu \nu}+(C \Gamma_{[ \mu_{1}...\mu_{5} ]})_{rs}Z'^{\mu_{1}...\mu_{5}}, \nonumber \\
\{\widehat{Q}_{r},\widehat{Q}_{s}\}&=&(C \Gamma_{\mu})_{rs}\widehat{P}^{\mu}+(C \Gamma_{[ \mu \nu ]})_{rs}\widehat{Z}^{\mu \nu}+(C \Gamma_{[ \mu_{1}...\mu_{5} ]})_{rs}\widehat{Z}^{\mu_{1}...\mu_{5}}, \nonumber \\
\{\widehat{Q}_{r},Q_{s}\}&=&(C \Gamma_{\mu})_{rs}P^{\mu}+(C \Gamma_{[ \mu \nu ]})_{rs}Z^{\mu \nu}+(C \Gamma_{[ \mu_{1}...\mu_{5} ]})_{rs}Z^{\mu_{1}...\mu_{5}}.
\end{eqnarray}
Analogously following \cite{Toppan1,Toppan2} we can construct the tilde operators of octonionic M-superalgebra:
\begin{eqnarray}
\{Q_{r},Q_{s}\}&=&\{Q_{r}^{\ast},Q_{s}^{\ast}\}=0, \nonumber \\
\{Q_{r},Q_{s}^{\ast}\}&=&P^{\mu}(C\Gamma_{\mu})_{rs}+Z_{O}^{\mu \nu}(C\Gamma_{\mu \nu})_{rs}, \nonumber \\
\{\widetilde{Q}_{r},\widetilde{Q}_{s}\}&=&\{\widetilde{Q}_{r}^{\ast},\widetilde{Q}_{s}^{\ast}\}=0, \nonumber \\
\{\widetilde{Q}_{r},\widetilde{Q}_{s}^{\ast}\}&=&\widetilde{P}^{\mu}(C\Gamma_{\mu})_{rs}+\widetilde{Z}_{O}^{\mu \nu}(C\Gamma_{\mu \nu})_{rs}, \nonumber \\
\{\widetilde{Q}_{r},Q_{s}^{\ast}\}&=&0,
\end{eqnarray}
where $\ast$ is the conjugation of octonions. Then we have a thermal octonionic M-superalgebra
\begin{eqnarray}
\{Q_{r},Q_{s}\}&=&\{Q_{r}^{\ast},Q_{s}^{\ast}\}=0 \nonumber \\
\{Q_{r},Q_{s}^{\ast}\}&=&P^{\mu}(C\Gamma_{\mu})_{rs}+Z_{O}^{\mu \nu}(C\Gamma_{\mu \nu})_{rs}, \nonumber \\
\{\widehat{Q}_{r},\widehat{Q}_{s}\}&=&\{\widehat{Q}_{r}^{\ast},\widehat{Q}_{s}^{\ast}\}=0, \nonumber \\
\{\widehat{Q}_{r},\widehat{Q}_{s}^{\ast}\}&=&\widehat{P}^{\mu}(C\Gamma_{\mu})_{rs}+\widehat{Z}_{O}^{\mu \nu}(C\Gamma_{\mu \nu})_{rs}, \nonumber \\
\{\widehat{Q}_{r},Q_{s}^{\ast}\}&=&P^{\mu}(C\Gamma_{\mu})_{rs}+Z_{O}^{\mu \nu}(C\Gamma_{\mu \nu})_{rs}, \nonumber \\
\{Q_{r},\widehat{Q}_{s}^{\ast}\}&=&P^{\mu}(C\Gamma_{\mu})_{rs}+Z_{O}^{\mu \nu}(C\Gamma_{\mu \nu})_{rs}. \nonumber
\end{eqnarray}
It is worth mentioning, as highlighted in \cite{Toppan1} the spinors can be related to octonionic-valued 4-component vectors. An octonionic formulation of M-superalgebra has also been developed in \cite{An}.
\section{Conclusions}
We present a general formalism for thermal Lie superalgebras  and thermal versions of some superalgebras: Poincaré superalgebra and M-superalgebra. A quasitriangular Hopf algebra associated to Lie Poincaré superalgebra was derived in order to obtain non-commutative coordinates through of Drinfeld twist. For the M-superalgebra, relations for the hat operator were presented in order to deduce the thermal version of the representations. An extension for octonionic superalgebra was also obtained. As perspectives, others twist operators can be proposed as well as a phenomenological interpretation of these results. The structure of quasitriangular Hopf algebra related to M-superalgebra is also a topic of interest and it is in progress. \ \
\\

\textbf{Conflict of interest} The authors declare that there is no conflict of interest.

\end{document}